%%%%%%%%%%%%%%%%%%%%%%%%%%%%%%%%%%%%%%%%%%%%%%%%%%%%%%%%%%%%%%%%%%%%%%%%%%%%%%%%
%2345678901234567890123456789012345678901234567890123456789012345678901234567890
%        1         2         3         4         5         6         7         8
\UseRawInputEncoding
\documentclass[letterpaper, 10 pt, conference]{ieeeconf}  % Comment this line out if you need a4paper

\IEEEoverridecommandlockouts                              % This command is only needed if 
                                                          % you want to use the \thanks command

\overrideIEEEmargins                                      % Needed to meet printer requirements.

%In case you encounter the following error:
%Error 1010 The PDF file may be corrupt (unable to open PDF file) OR
%Error 1000 An error occurred while parsing a contents stream. Unable to analyze the PDF file.
%This is a known problem with pdfLaTeX conversion filter. The file cannot be opened with acrobat reader
%Please use one of the alternatives below to circumvent this error by uncommenting one or the other
%\pdfobjcompresslevel=0
%\pdfminorversion=4

% See the \addtolength command later in the file to balance the column lengths
% on the last page of the document

% The following packages can be found on http:\\www.ctan.org
\usepackage{graphics} % for pdf, bitmapped graphics files
\usepackage{epsfig} % for postscript graphics files
\usepackage{amsmath} % assumes amsmath package installed
\usepackage{amssymb}  % assumes amsmath package installed
\usepackage{subcaption} % For subfigures
\usepackage{color,soul}
\usepackage{mathtools}

\newtheorem{thm}{Theorem}

\newtheorem{cor}{Corollary}

\newtheorem{defn}{Definition}

\newtheorem{rem}{Remark}

\title{\LARGE \bf
Region of Attraction Estimate Learning and Verification for Nonlinear Systems using Neural-Network-based Lyapunov Functions
}

\author{Adel Bechihi$^{1}$, \textit{Member, IEEE}, and Aristotelis Kapnopoulos$^{1}$% <-this % stops a space
%\thanks{*This work was not supported by any organization}% <-this % stops a space
\thanks{$^{1}$ Adel Bechihi and Aristotelis Kapnopoulos are with Mitsubishi Electric R\&D Centre
	Europe, 35700, Rennes, France. Email:
        {\tt\small a.bechihi@fr.merce.mee.com} and {\tt\small a.kapnopoulos@fr.merce.mee.com}.}%
}

\begin{document}

\maketitle
\thispagestyle{empty}
\pagestyle{empty}

%%%%%%%%%%%%%%%%%%%%%%%%%%%%%%%%%%%%%%%%%%%%%%%%%%%%%%%%%%%%%%%%%%%%%%%%%%%%%%%%
\begin{abstract}
	Estimating the Region of Attraction (RoA) for nonlinear dynamical systems is a fundamental problem in control theory, with direct implications for stability analysis and safe controller design. Traditional approaches rely on analytically derived Lyapunov functions, which are often conservative and challenging to construct for high-dimensional or highly nonlinear systems. In this work, we propose a data-driven framework for learning and verifying RoA estimates for nonlinear systems using neural-network-based Lyapunov functions.	
	Our method employs a composite Lyapunov function that combines a quadratic term with a neural-network-based component, providing both structure and flexibility. We introduce a novel homogeneous loss function for training, which removes the imbalance typically caused by the two non-homogeneous Lyapunov conditions. Together, these two aspects enable efficient training of the Lyapunov candidate. To guarantee the correctness of the learned Lyapunov function, we employ a Satisfiability Modulo Theories (SMT) solver to formally verify the stability results. Lastly, we perform a deeper analysis near the origin to overcome numerical artifacts, ensuring strict asymptotic stability.	
	We demonstrate the effectiveness of our approach on benchmark nonlinear systems, showing that it significantly reduces conservatism compared to traditional Lyapunov methods while maintaining verifiability. This framework bridges the gap between function approximation and stability certification, paving the way for scalable safety analysis in learning-based control and safety-critical applications.
\end{abstract}

%%%%%%%%%%%%%%%%%%%%%%%%%%%%%%%%%%%%%%%%%%%%%%%%%%%%%%%%%%%%%%%%%%%%%%%%%%%%%%%%
\section{Introduction}

Learning-based approaches have recently become a cornerstone for tackling complex nonlinear control problems, with neural networks playing a central role in this transformation. Their ability to approximate highly nonlinear mappings has enabled the development of control policies through interaction with the environment, achieving impressive results across a wide range of robotic platforms \cite{c1,c2}. Yet, despite these successes, a fundamental limitation persists: most learning-based controllers lack formal stability guarantees. This absence of theoretical assurance raises critical concerns about safety and interpretability, particularly in high-stakes applications where reliability is paramount. Bridging this gap between empirical performance and rigorous certification remains an open challenge.

Lyapunov theory provides a foundational tool for stability analysis, known as Lyapunov functions, offering a systematic way to certify system stability without explicitly solving differential equations. Moreover, Lyapunov functions can be used in a straightforward way to infer an estimate of the region of attraction (RoA) of the studied system \cite{c6}. This motivated numerous approaches for constructing Lyapunov functions across different classes of dynamical systems. For linear systems, Lyapunov functions can be derived by solving algebraic equations \cite{c6}, while for polynomial systems, sum-of-squares (SoS) techniques enable Lyapunov synthesis through semi-definite programs (SDPs) \cite{c7}. Despite their strong theoretical foundations, these approaches impose restrictive structural assumptions that limit their applicability to general nonlinear systems. Furthermore, SDP-based methods suffer from scalability and numerical sensitivity issues \cite{c8}, making them computationally infeasible for high-dimensional or high-degree systems.

To overcome these limitations, recent research has increasingly explored learning-based methods for Lyapunov function synthesis. Neural networks, supported by the universal approximation theorem \cite{c11}, can approximate any continuous function on a compact domain, making them natural candidates for representing complex Lyapunov functions for nonlinear systems. Early works demonstrated the feasibility of this idea by approximating Lyapunov functions with shallow networks \cite{c13}, marking one of the first attempts to apply neural approximators in stability analysis. Subsequent studies proposed architectures and training strategies to jointly learn Lyapunov functions and stabilizing controllers. For instance, Chang et al. \cite{c16} investigated Lyapunov learning with quadratic and $\tanh$ activation functions, while Dai et al. \cite{c14} introduced a mixed-integer programming (MIP) approach to verify stability conditions during training. More recent works have proposed specialized architectures to enforce positive definiteness and other Lyapunov properties by design. In \cite{c15} authors propose a specialized network architecture to ensure positive definiteness in Lyapunov functions. However, this requires activation functions with trivial null space, which excludes many common choices like ReLU, sigmoid, and tanh functions. Additionally, their approach can cause the network output to grow excessively fast as the input norm increases and conservative stability conditions due to reliance on Lipschitz bound limit scalability and practical applicability. In \cite{c17} a Lyapunov neural network  architecture is proposed for high dimensional systems that offers theoretical guarantees for approximating Lyapunov functions with bounded Lipschitz constants, aiding robust certification. However, this can lead to a conservative and costly verification, certifying only a delta-accurate Lyapunov function, guaranteeing asymptotic stability only outside a neighborhood around the origin due to the non-strict negativity near the origin of the Lyapunov function derivative.

Synthesizing neural Lyapunov functions requires verifying that the trained function satisfies Lyapunov conditions across a specified domain, since learning-based approaches rely data that represent only samples with a dense domain representing the physical space. Existing verification techniques fall into three categories, with the first one relying on generalization bounds, computationally efficient but offering only probabilistic guarantees. The second relies on Lipschitz-based analysis to provide deterministic guarantees, though at higher computational cost. The third, optimization-based methods, bridge this gap by verifying the certificate throughout training via a learner-verifier loop also known as counterexample guided inductive synthesis, often using satisfiability modulo theories (SMT) \cite{c16} or MIP \cite{c14}, solvers to find counterexamples and refine the certificate.

In this work, we present a framework for learning Lyapunov functions that addresses these challenges comprehensively. Our method guarantees positive definiteness by design, making the training process easier and faster. We also introduce a new loss function, carefully designed to allow the increase of the RoA estimate size during training. To provide soundness of the solution, we employ an SMT-based verification method to identify states that violate the Lyapunov conditions. The SMT solver used in this work is dReal \cite{c20}. Moreover, our method guarantees that the trained Lyapunov neural network satisfies the Lyapunov conditions for all states within the verified domain, bypassing the common limitation of delta-Lyapunov functions \cite{c16,c17}, which fail to certify asymptotic stability near the origin. We validate the proposed framework on two nonlinear dynamical systems examples, demonstrating its ability to synthesize Lyapunov functions and infer RoA estimates efficiently. %\hl{Furthermore, we provide a comparative analysis against classical LQR and SoS/SDP-based methods, highlighting improved scalability, robustness, and practical applicability.}

The paper is organized as follows. First, we present preliminary results and formulate the problem. Next, we describe our methodology for constructing RoA estimates, including design, training, and verification strategies. Finally, we provide two numerical examples and conclude the paper.

\section{Preliminaries and problem formulation}
We consider an n-dimensional nonlinear continuous-time dynamical system described by the differential equation:
\begin{align}
	\label{eq:system_dynamics}
	\dot{x} = f(x),
\end{align}
where $x \in \mathcal{D} \subset \mathbb{R}^n$ is the system state and $f: \mathcal{D} \to \mathbb{R}^n$ is the flow map which we assume to be locally Lipschitz in $x$. 

\begin{thm}[Lyapunov's indirect method \cite{c6}]
	\label{thm:lyap_indirect}
	Let $x_e$ be the equilibrium point for the dynamical system \eqref{eq:system_dynamics}. Let $A = \frac{\partial f}{\partial x}\big|_{x=x_e}$ be the Jacobian matrix of $f$ at $x_e$. Then, $x_e$ is asymptotically stable if all the eigenvalues of $A$ have a strictly negative real part.
\end{thm}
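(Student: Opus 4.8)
The plan is to prove Lyapunov's indirect method by linearizing the dynamics about the equilibrium and reducing the nonlinear stability question to a statement about the linearized system, which is controlled by the eigenvalues of $A$. I would first shift coordinates so that the equilibrium sits at the origin, writing $z = x - x_e$, and then perform a first-order Taylor expansion of $f$, giving $\dot{z} = A z + g(z)$, where $A = \frac{\partial f}{\partial x}\big|_{x=x_e}$ and $g$ collects the higher-order remainder. The key analytic fact I would establish about the remainder is that it is superlinear near the origin, i.e. $\|g(z)\|/\|z\| \to 0$ as $\|z\| \to 0$; this follows from local Lipschitzness and differentiability of $f$ at $x_e$ and is what lets the linear part dominate in a neighborhood of the origin.

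The central idea is to construct a quadratic Lyapunov function for the full nonlinear system out of the linear part. Since, by hypothesis, every eigenvalue of $A$ has strictly negative real part, $A$ is Hurwitz, and the Lyapunov equation $A^\top P + P A = -Q$ admits a unique symmetric positive definite solution $P$ for any chosen symmetric positive definite $Q$ (e.g. $Q = I$). I would then propose $V(z) = z^\top P z$ as a candidate and verify the two Lyapunov conditions of Theorem~\ref{thm:lyap_indirect}'s setting: positive definiteness of $V$ is immediate from $P \succ 0$, so the work is in the derivative. Computing $\dot{V}$ along trajectories of $\dot{z} = A z + g(z)$ gives
\begin{align}
	\dot{V}(z) &= z^\top (A^\top P + P A) z + 2 z^\top P\, g(z) \nonumber \\
	&= -z^\top Q z + 2 z^\top P\, g(z).
\end{align}
The first term is a negative definite quadratic bounded above by $-\lambda_{\min}(Q)\|z\|^2$, while the cross term is bounded by $2\|P\|\,\|z\|\,\|g(z)\|$.

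The main obstacle, and the crux of the argument, is to show that the negative definite quadratic term dominates the perturbation term in a sufficiently small neighborhood of the origin. Here I would invoke the superlinear bound on $g$: given the fixed constant $\lambda_{\min}(Q) > 0$, the property $\|g(z)\|/\|z\| \to 0$ lets me choose a radius $r > 0$ so small that $2\|P\|\,\|g(z)\| < \lambda_{\min}(Q)\|z\|$ for all $0 < \|z\| < r$. On that ball the cross term is strictly smaller in magnitude than the quadratic term, hence $\dot{V}(z) < 0$ for $z \neq 0$, establishing that $V$ is a strict Lyapunov function on a neighborhood of the origin. By the standard Lyapunov asymptotic stability theorem this proves $x_e$ is asymptotically stable, completing the argument. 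The only subtlety to handle carefully is making the radius selection uniform and ensuring the domain of validity stays inside $\mathcal{D}$ where $f$ is defined and Lipschitz.
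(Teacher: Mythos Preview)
Your argument is correct and is exactly the standard proof of Lyapunov's indirect method: linearize, solve the Lyapunov equation $A^\top P + PA = -Q$ using the Hurwitz hypothesis, take $V(z)=z^\top P z$, and absorb the superlinear remainder $g(z)$ into the strictly negative quadratic term on a small enough ball. There is nothing to fix.

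Note, however, that the paper does \emph{not} give its own proof of this theorem. Theorem~\ref{thm:lyap_indirect} is stated as a background result and attributed to Khalil~\cite{c6}; the paper simply invokes it without argument. Your proof is essentially the one in that reference, so in that sense it matches the cited source, but there is no in-paper proof to compare against.
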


Theorem \ref{thm:lyap_indirect} provides simple tool to check the stability properties of a given equilibrium point providing local stability information around it. However, it does not quantify how far this stability extends.

\begin{defn}[Region of attraction \cite{c6}]
	Let $\phi(t;x,t_0)$ be the solution of the dynamical system \eqref{eq:system_dynamics} at time $t$ that starts at initial state $x \in \mathcal{D}$ at time initial time $t_0$. Suppose that $\phi(t;x,t_0)$ is defined for all $t \ge t_0$. The region of attraction $\mathcal{R}$ of the equilibrium point $x_e = 0 $ is defined by:
	\begin{align*}
		\mathcal{R} = \{ x \in \mathcal{D} | \lim_{t \to \infty} \phi(t;x,t_0) = 0\}. 
	\end{align*} 	
\end{defn}

Estimating this region is crucial in nonlinear systems analysis, as it characterizes the robustness of stability and informs safe operating conditions. Unlike linear systems, where stability implies global convergence, nonlinear systems often exhibit limited regions of attraction, making their estimation a central problem in control theory.
 
\begin{thm}[Lyapunov function \cite{c6}]
	\label{def:LF}
	Let $x_e = 0$ be an equilibrium point for the dynamical system \eqref{eq:system_dynamics} and $\mathcal{D} \subset \mathbb{R}^n$ be a domain containing $x_e$. Let $V: \mathcal{D} \to \mathbb{R}$ be a continuously differentiable function such that
	\begin{align}
		&V(0) = 0 \text{ and } V(x) > 0;  \forall x \in \mathcal{D} \setminus \{0\}, \label{eq:positive_definite_V} \\
		&\dot{V}(x) = \nabla V(x)^\top f(x) < 0;  \forall x \in \mathcal{D} \setminus \{0\},  \label{eq:negative_Vdot}
	\end{align}
	then $x_e = 0$ is asymptotically stable.
\end{thm}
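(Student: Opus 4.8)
The plan is to establish the two components of asymptotic stability separately: first Lyapunov stability (solutions starting near the origin stay near it), then attractivity (they converge to it). Both arguments hinge on the observation that $V$ is monotonically non-increasing along solutions, since along any trajectory $\phi(t;x,t_0)$ of \eqref{eq:system_dynamics} we have $\frac{d}{dt}V(\phi) = \nabla V(\phi)^\top f(\phi) = \dot V(\phi) < 0$ by \eqref{eq:negative_Vdot}.

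For stability, I would fix $\epsilon > 0$ small enough that the closed ball $\overline{B}_\epsilon \subset \mathcal{D}$. The sphere $\{\|x\| = \epsilon\}$ is compact and avoids the origin, so by \eqref{eq:positive_definite_V} the continuous function $V$ attains a strictly positive minimum $\alpha := \min_{\|x\|=\epsilon} V(x) > 0$ on it. Picking any $\beta \in (0,\alpha)$, I define the sublevel set $\Omega_\beta = \{x \in \overline{B}_\epsilon : V(x) \le \beta\}$, and use continuity of $V$ together with $V(0)=0$ to find $\delta > 0$ with $V(x) < \beta$ whenever $\|x\| < \delta$. The key step is that $\Omega_\beta$ is positively invariant: since $V$ cannot increase along a trajectory, a solution starting in $\Omega_\beta$ keeps $V \le \beta < \alpha$ and therefore can never reach the sphere $\|x\|=\epsilon$, so it remains inside $B_\epsilon$. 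This delivers the $\epsilon$-$\delta$ characterization of stability.

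For attractivity, along a trajectory starting in $\Omega_\beta$ the value $V(\phi(t))$ is non-increasing and bounded below by $0$, hence converges to some limit $c \ge 0$, and the task reduces to showing $c = 0$. I would argue by contradiction: if $c > 0$, continuity of $V$ with $V(0)=0$ yields a radius $r>0$ such that $V(x) < c$ on $\overline{B}_r$, confining the trajectory to the compact annulus $A = \{x : r \le \|x\| \le \epsilon\}$, on which the continuous function $\dot V$ attains a strictly negative maximum $-\gamma := \max_{x \in A} \dot V(x) < 0$. Integrating gives $V(\phi(t)) \le V(\phi(0)) - \gamma t$, which drives $V$ negative in finite time and contradicts $V \ge 0$. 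Hence $c = 0$, and positive definiteness of $V$ forces $\phi(t) \to 0$.

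The main obstacle is the attractivity step: monotone convergence alone only guarantees a limit $c \ge 0$, not $c = 0$, so the crux is extracting a \emph{uniform} negative upper bound on $\dot V$ away from the origin. This is precisely where compactness of the annulus is needed, together with continuity of $\nabla V$ and local Lipschitzness of $f$, which make $\dot V$ continuous and thus guarantee it attains its maximum on $A$.
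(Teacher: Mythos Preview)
Your argument is correct and is precisely the classical two-step proof (stability via compact sublevel sets, attractivity via a uniform negative bound on $\dot V$ over an annulus) found in Khalil~\cite{c6}, from which the paper simply imports this theorem without proof. There is nothing to compare: the paper states Theorem~\ref{def:LF} as a preliminary result and cites the reference, so your write-up supplies exactly the standard proof the citation points to.
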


\begin{cor}[RoA estimate \cite{c6}]
	\label{cor:RoA_estimate}
	Let $V: \mathcal{D} \to \mathbb{R}$ be a Lyapunov function for the dynamical system \eqref{eq:system_dynamics}. Every compact sub-level set $\Omega_c = \{ x \in \mathcal{D} | V(x) \le c\}$ that satisfies $\Omega_c \subset \mathcal{H} = \{x \in \mathcal{D} | \dot{V}(x) < 0 \} \cup \{0\}$ is an inner estimate of the region of attraction with $\Omega_c \subset \mathcal{R} \subset \mathcal{D}$.
\end{cor}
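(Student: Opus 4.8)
The plan is to fix an arbitrary initial state $x_0 \in \Omega_c$ and establish two properties of its trajectory $\phi(t;x_0,t_0)$: that it remains in $\Omega_c$ for all $t \ge t_0$ (positive invariance), and that it converges to the origin. Since $x_0$ is arbitrary, this yields $\Omega_c \subset \mathcal{R}$, while the inclusion $\mathcal{R} \subset \mathcal{D}$ is immediate from the definition of the region of attraction. Note also that $0 \in \Omega_c$, since $V(0) = 0 \le c$.

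The first step is to prove positive invariance. Along the trajectory, the scalar map $t \mapsto V(\phi(t;x_0,t_0))$ is differentiable with derivative $\dot{V}(\phi) = \nabla V(\phi)^\top f(\phi)$. As long as $\phi$ remains in $\Omega_c \subset \mathcal{H}$, hypothesis \eqref{eq:negative_Vdot} gives $\dot{V}(\phi) \le 0$, with equality only at the origin, so $V$ is non-increasing along the motion and hence $V(\phi(t)) \le V(x_0) \le c$. The delicate point is to exclude the trajectory crossing the boundary level set $\{V = c\}$: I would argue by contradiction, letting $t^\star$ be the first time the trajectory reaches that boundary and using the monotonicity of $V$ on $[t_0, t^\star]$ together with continuity to derive a contradiction, so that $\phi(t) \in \Omega_c$ throughout the maximal interval of existence.

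The second step is to rule out finite escape time and then establish convergence. Since $\Omega_c$ is compact and the trajectory remains inside it, the solution cannot leave every compact subset of $\mathcal{D}$ in finite time, so $\phi(t;x_0,t_0)$ is defined for all $t \ge t_0$, as required for membership in $\mathcal{R}$. Working then on the compact, positively invariant set $\Omega_c$, I would invoke LaSalle's invariance principle: every trajectory approaches the largest invariant set contained in $\{x \in \Omega_c : \dot{V}(x) = 0\}$. By hypothesis this set reduces to $\{0\}$, whose only invariant subset is itself, so every trajectory starting in $\Omega_c$ tends to the origin, establishing $\Omega_c \subset \mathcal{R}$.

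I expect the main obstacle to be the rigorous justification of positive invariance, since one cannot directly appeal to the asymptotic stability asserted by Theorem \ref{def:LF}; instead one must combine the monotonicity of $V$ along trajectories with a continuity argument and the compactness of the sublevel set to guarantee that no trajectory escapes $\Omega_c$ before the Lyapunov decrease can be exploited. Once invariance and the absence of finite escape time are secured, the convergence step is a routine application of LaSalle's principle given the strict negativity of $\dot{V}$ away from the origin.
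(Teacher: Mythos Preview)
The paper does not actually supply a proof of Corollary~\ref{cor:RoA_estimate}; it is quoted as a standard result from \cite{c6}. The closest the paper comes is the proof of Theorem~\ref{our_theorem}, which follows precisely your two-step template: strict decrease of $t\mapsto V(x(t))$ on the sublevel set yields forward invariance, and then LaSalle's invariance principle gives convergence since the largest invariant subset of $\{\dot V=0\}\cap\Omega_c$ is $\{0\}$. Your proposal is correct and matches that argument, with the added care of explicitly handling finite escape time via compactness---a point the paper leaves implicit.
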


Corollary \ref{cor:RoA_estimate} provides a constructive approach for estimating the RoA. However, the size of the resulting estimate is strongly influenced by the choice of the Lyapunov function $V$, defined over the domain $\mathcal{D}$, and by the level set parameter $c$. To obtain the largest possible RoA estimate, our objective is to develop a systematic method that jointly determines an appropriate Lyapunov function $V$ and its corresponding maximal admissible level $c_{\max}$, defined as:
\begin{align}
	\label{eq:c_max}
	c_{max} = \sup\left\{ c>0 | \forall x \in \Omega_c \setminus \{0\}, \dot{V}(x) < 0 \right\},
\end{align}
where $\Omega_c = \{ x \in \mathcal{D} | V(x) \le c\}$. This formulation ensures that the level set $\Omega_{c_{\max}}$ is forward invariant and provides the largest certified inner approximation of the true RoA.

Directly searching over the space of all admissible Lyapunov functions is an infinite-dimensional problem, which is computationally intractable. To overcome this challenge, we restrict our search to a parametric family of Lyapunov functions $V_\theta$, where $\theta \in \mathbb{R}^p$ denotes a finite-dimensional parameter vector. This parametrization transforms the original infinite-dimensional optimization into a finite-dimensional one, making it amenable to numerical methods while preserving flexibility in the choice of $V$.

\section{Methodology}
In this section we present our method to design a parametric Lyapunov function for the dynamical system \eqref{eq:system_dynamics}, and infer its corresponding largest RoA estimate. The proposed function $V_\theta:\mathbb{R}^n \to \mathbb{R}$ is optimized to maximize the size of the RoA estimate through a training process. Finally, a verification process is presented to certify the correctness to the deduced RoA estimate. 

%based on a transformation of a neural network model $\varphi_\theta:\mathbb{R}^n \to \mathbb{R}^m$ that is detailed in subsection \ref{subsec:design_LF}. \hl{ABC}

%\hl{****** ADD A NOTE ON LOCALLY ASYMPTOTICALLY STABLE.**** THIS PARAGRAPH WILL BE WRITTEN LATER Lorem ipsum dolor sit amet, consectetur adipiscing elit. Donec quis sollicitudin ante. Interdum et malesuada fames ac ante ipsum primis in faucibus. Suspendisse sollicitudin accumsan semper. Sed rhoncus, lectus in tristique aliquet, ligula ipsum porta ex, et lacinia ante nulla at turpis.}

\subsection{Design of Lyapunov-based loss function}
\subsubsection{Lyapunov candidate function design}
\label{subsec:design_LF}

We propose a Lyapunov candidate function that incorporates a feed-forward neural network, which is trained to approximate functions satisfying Lyapunov conditions for assessing the stability of nonlinear dynamical systems. The network comprises an input layer with $n$ neurons, corresponding to the dimension of the state, a single hidden layer with $h_1$ neurons, and an output layer with $m$ neurons. Although the general formulation allows for an arbitrary number of hidden layers $k \geq 0$, the present analysis is restricted to the case $k=1$, which facilitates theoretical tractability and interpretability.

Each neuron in the hidden layer applies a nonlinear activation function, which is critical for enabling the approximation of  complex, non-convex functions. The choice of activation function significantly influences the smoothness, differentiability, and boundedness of the Lyapunov candidate function. %ABC_REMOVED In this context, smooth and continuously differentiable functions such as the hyperbolic tangent ($\tanh$) or sigmoid are of interest, as they ensure the existence of gradients required for stability analysis and optimization.

The general one-hidden-layer feed-forward neural network architecture is expressed by the following equations:
\begin{align*}
	z &= \sigma(W_1 x + b_1), \\
	y &= W_2 z + b_2,
\end{align*}
where $x \in \mathbb{R}^n$ is the input vector representing the system state, $W_1, b_1, W_2, b_2$ are the corresponding weights and biases of the hidden and output layers respectively, and $y \in \mathbb{R}^m$ is the neural network output. The function $\sigma:\mathbb{R}\to\mathbb{R}$ is a given nonlinear activation function that is applied element-wise to each node of the hidden layer. 

We write the previous equations of the neural network in a compact form as $y = \varphi_\theta(x)$, where $\theta = \{W_1,b_1,W_2,b_2\}$ denotes the set of all trainable parameters of the network, and $\varphi_\theta:\mathbb{R}^n\to\mathbb{R}^m$ represents the neural network mapping.

Thus, we propose the Following neural-network-based Lyapunov candidate function:
\begin{align}
	\label{eq:our_lyapunov_function}
	V_\theta(x) = \| \varphi_\theta(x) - \varphi_\theta(x_e) \|^2 + (x-x_e)^TP_\theta (x-x_e),
\end{align}
where $P_\theta \in \mathbb{R}^{n \times n}$ is a symmetric and positive definite matrix, whose construction is detailed in subsection \ref{sec:analysis_origin}.

It is easy to verify that $V_\theta$ is a positive definite function by design, thus satisfying the condition \eqref{eq:positive_definite_V}, for any neural network structure $\varphi_\theta$ and any parameters $\theta$. In the sequel, we restrict our analysis to neural network architectures with a scalar output, i.e., $m = 1$, and select the $\tanh$ function as the activation function due to its smoothness properties. Note that the bias term $b_2$​ in the output layer is unnecessary in this formulation, since the Lyapunov candidate function $V_\theta$ depends only on $\varphi_\theta(x) - \varphi_\theta(x_e)$, which cancels out any constant offset. Therefore, the set of all trainable parameters is restricted to $\theta = \{W_1,b_1,W_2\}$. Without loss of generality, we assume that the origin $x_e = 0$ is a locally asymptotically stable equilibrium point of system \eqref{eq:system_dynamics} for the remainder of this paper, and restrict our analysis to it.

\subsubsection{Loss function design for maximizing the size of the region of attraction estimate}

To verify the Lyapunov conditions given in Theorem \ref{def:LF} for the proposed  Lyapunov candidate function \eqref{eq:our_lyapunov_function}, we design a learning strategy to adapt the network parameters $\theta$ such that resulting Lyapunov function allows to infer a RoA estimate.

We first state the following theorem.
\begin{thm}
	\label{our_theorem}
	Let $V_\theta:\mathcal{D}\to\mathbb{R}$ be continuously differentiable on the domain $\mathcal{D}\subset\mathbb{R}^n$ containing the origin, with $V_\theta(0)=0$ and $V_\theta(x)>0$ for all $x\neq 0$. Consider the system \eqref{eq:system_dynamics} and define 
	\begin{align*}
		&\hat{\mathcal{R}} = \nonumber \\ 
		&\{x \in \mathcal{D} | \forall y\in \mathcal{D} \setminus \{0\}, V_\theta(y) \le V_\theta(x) \Rightarrow \dot{V_\theta}(y)<0\} \cup \{0\},
	\end{align*}
	where $\dot V_\theta(y) = \nabla V_\theta(y)^T f(y)$. Then $\hat{\mathcal{R}}$ is positively invariant and is contained in the region of attraction of the origin. Moreover, $\hat{\mathcal{R}}$ coincides with $\Omega_{c_{\max}}$ introduced in \eqref{eq:c_max}.
\end{thm}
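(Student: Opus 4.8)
The plan is to recast membership in $\hat{\mathcal{R}}$ in terms of sublevel sets and then exploit the monotonicity of $V_\theta$ along trajectories. First I would observe that, by definition, $x\in\hat{\mathcal{R}}$ is equivalent to the statement that the whole punctured sublevel set $\Omega_{V_\theta(x)}\setminus\{0\}=\{y\in\mathcal{D}: 0<V_\theta(y)\le V_\theta(x)\}$ is contained in $\mathcal{H}=\{\dot V_\theta<0\}\cup\{0\}$. Calling a value $c>0$ \emph{admissible} when $\Omega_c\setminus\{0\}\subset\mathcal{H}$, and noting that $\Omega_{c'}\subset\Omega_c$ whenever $c'\le c$, the admissible values form a downward-closed interval of the form $(0,c_{\max})$ or $(0,c_{\max}]$, with $c_{\max}$ exactly the supremum in \eqref{eq:c_max}. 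Hence $\hat{\mathcal{R}}=\{x:V_\theta(x)\text{ admissible}\}\cup\{0\}$, which makes the link with $\Omega_{c_{\max}}$ transparent.

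For positive invariance I would take $x_0\in\hat{\mathcal{R}}$ and the forward trajectory $\phi(t;x_0,t_0)$. On $\Omega_{V_\theta(x_0)}\setminus\{0\}$ we have $\dot V_\theta<0$, so while the trajectory remains there $V_\theta(\phi(t))$ is strictly decreasing and can never exceed $V_\theta(x_0)$; the trajectory is therefore trapped in the compact set $\Omega_{V_\theta(x_0)}$ for all $t\ge t_0$ (ruling out finite escape). Since $V_\theta(\phi(t))\le V_\theta(x_0)$, the punctured sublevel set at height $V_\theta(\phi(t))$ sits inside $\Omega_{V_\theta(x_0)}\setminus\{0\}\subset\mathcal{H}$, so $\phi(t)\in\hat{\mathcal{R}}$, giving positive invariance. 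Containment in the region of attraction then follows from the standard Lyapunov argument behind Theorem \ref{def:LF} and Corollary \ref{cor:RoA_estimate}: on the compact, positively invariant set $\Omega_{V_\theta(x_0)}$ the function $V_\theta$ is a strict Lyapunov function, so $V_\theta(\phi(t))$ decreases to a limit and the strict negativity \eqref{eq:negative_Vdot} (via LaSalle) forces $\phi(t)\to0$, i.e. $x_0\in\mathcal{R}$.

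For the identity $\hat{\mathcal{R}}=\Omega_{c_{\max}}$ I would prove the two inclusions separately, one of which is immediate. If $x\in\hat{\mathcal{R}}$ then $V_\theta(x)$ is admissible, hence $V_\theta(x)\le c_{\max}$ by definition of the supremum, so $x\in\Omega_{c_{\max}}$; this gives $\hat{\mathcal{R}}\subset\Omega_{c_{\max}}$. Conversely, any $x$ with $V_\theta(x)<c_{\max}$ lies below some admissible level and so belongs to $\hat{\mathcal{R}}$ by downward-closedness, yielding $\{V_\theta<c_{\max}\}\subset\hat{\mathcal{R}}$. Together these bound $\hat{\mathcal{R}}$ tightly: $\{V_\theta<c_{\max}\}\subset\hat{\mathcal{R}}\subset\Omega_{c_{\max}}$.

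The main obstacle, and the only place where the set-theoretic bookkeeping does not close on its own, is the boundary level set $\{x:V_\theta(x)=c_{\max}\}$: upgrading the two inclusions to an equality requires that the supremum in \eqref{eq:c_max} be attained, i.e. that $c_{\max}$ itself be admissible so that $\Omega_{c_{\max}}\setminus\{0\}\subset\mathcal{H}$. I would attack this with a compactness/continuity argument, using that $\dot V_\theta$ is continuous and the sublevel sets are compact (as in Corollary \ref{cor:RoA_estimate}): one characterizes $c_{\max}=\inf\{V_\theta(y):y\neq0,\ \dot V_\theta(y)\ge0\}$ and asks whether this infimum is realized on the level set $V_\theta=c_{\max}$. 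This is genuinely delicate, since at a tangency of $\{V_\theta=c_{\max}\}$ with $\{\dot V_\theta=0\}$ the infimum can be attained with $\dot V_\theta=0$; accordingly I expect the equality $\hat{\mathcal{R}}=\Omega_{c_{\max}}$ to hold up to this boundary level set, and I would invoke the strict condition \eqref{eq:negative_Vdot} together with the behaviour of $V_\theta$ near the origin precisely to control it.
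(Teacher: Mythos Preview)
Your approach is essentially the paper's: both recast membership in $\hat{\mathcal{R}}$ via the sublevel set $\Omega_{V_\theta(x)}$, use strict decrease of $V_\theta$ along trajectories for forward invariance, invoke LaSalle for convergence, and then prove the two inclusions $\hat{\mathcal{R}}\subset\Omega_{c_{\max}}$ and $\Omega_{c_{\max}}\subset\hat{\mathcal{R}}$ separately.

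Where you go further is in flagging the boundary level $\{V_\theta=c_{\max}\}$, and your caution is justified. The paper handles the inclusion $\Omega_{c_{\max}}\subset\hat{\mathcal{R}}$ by taking $x\in\Omega_{c_{\max}}$, setting $c=V_\theta(x)$, and asserting that ``by definition of $c_{\max}$, all $y$ with $V_\theta(y)\le c$ satisfy $\dot V_\theta(y)<0$''. This tacitly assumes the supremum in \eqref{eq:c_max} is attained; it is not argued. So the subtlety you isolate is genuine and is simply glossed over in the paper, not resolved there. Your sandwich $\{V_\theta<c_{\max}\}\subset\hat{\mathcal{R}}\subset\Omega_{c_{\max}}$ is the honest statement one gets from the hypotheses as written; closing it to an equality does require an extra attainment argument (or an additional standing assumption) that neither you nor the paper supplies.
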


\begin{proof}
	Let $x \in \hat{\mathcal{R}} \setminus \{0\}$ and set $c_0 := V_\theta(x)$. By definition of $\hat{\mathcal{R}}$, for every $y \in \mathcal{D}$ with $V_\theta(y) \le c_0$ and $y \neq 0$, we have $\dot V_\theta(y) < 0$. Define the sublevel set	$\Omega_{c_0} := \{ y \in \mathcal{D} | V_\theta(y) \le c_0 \}$.	On $\Omega_{c_0} \setminus \{0\}$, $\dot V_\theta < 0$ holds everywhere. Therefore, along any trajectory starting in $\Omega_{c_0}$, the function $t \mapsto V_\theta(x(t))$ is strictly decreasing until it reaches its minimum at $0$. This implies that $\Omega_{c_0}$ is forward invariant because $V_\theta(x(t))$ cannot exceed its initial value $c_0$. Moreover, every trajectory starting in $\Omega_{c_0}$ converges to the origin by LaSalle's invariance principle, since the largest invariant subset of $\{ y \in \Omega_{c_0} | \dot V_\theta(y) = 0 \}$ is $\{0\}$.
	
	To prove that  $\hat{\mathcal{R}}$ coincides with $\Omega_{c_{\max}}$ introduced in \eqref{eq:c_max}, consider $x \in \Omega_{c_{\max}}$ and set $c = V_\theta(x)$. By definition of $c_{\max}$, all $y \in \mathcal{D}$ with $V_\theta(y) \le c$ satisfy $\dot V_\theta(y) < 0$ except at $y=0$. Therefore $x \in \hat{\mathcal{R}}$. Conversely, if $x \in \hat{\mathcal{R}}$, then the strict negativity condition holds on $\Omega_{V_\theta(x)}$, so $V_\theta(x) \le c_{\max}$ by maximality of $c_{\max}$. Hence $\hat{\mathcal{R}} = \Omega_{c_{\max}}$.
\end{proof}

The above definition for $\hat{\mathcal{R}}$ allow us to characterize the largest compact invariant set, for a given Lyapunov function $V_\theta$ without the need of explicitly computing the highest level $c_{max}$. While Theorem \ref{our_theorem} provides a way to identify all points in the continuous  domain $\mathcal{D}$ that lie within the RoA estimate $\hat{\mathcal{R}}$, in practice we work with a sampled approximation. To do this, we shift our attention to a grid $\mathcal{G} \subset \mathcal{D}$, which serves as a discretization of the state space. Our goal is to approximate the RoA estimate described by Theorem \ref{our_theorem} using the grid $\mathcal{G}$. To this end, we introduce the finite countable set $\hat{\mathcal{R}}_\mathcal{G} = \hat{\mathcal{R}} \cap \mathcal{G}$, with $\mathcal{G} = \{x_i \in \mathcal{D} | i=1,...,N\}$ and $N = \text{card}\left(\mathcal{G}\right)$. We assume that the origin does not belong to $\mathcal{G}$. Thus, $\hat{\mathcal{R}}_\mathcal{G}$ can be written as follows:
\begin{align*}
	\hat{\mathcal{R}}_\mathcal{G} = \, \{x_i \in \mathcal{G} \,|\, \forall x_j\in \mathcal{G}, V_\theta(x_j) \le V_\theta(x_i) \Rightarrow \dot{V_\theta}(x_j) < 0\}.
\end{align*}

Intuitively, a point $x_i$ belongs to $\hat{\mathcal{R}}_\mathcal{G}$ if every other point with a smaller or equal Lyapunov value satisfies the strict decrease condition. This discrete definition mirrors the set $\hat{\mathcal{R}}$ introduced in Theorem \ref{our_theorem}.

Based on logical operators properties, we compute the cardinality of the $\hat{\mathcal{R}}_\mathcal{G}$, which gives the following expression: 
\begin{align*}
	&\text{card}(\hat{\mathcal{R}}_\mathcal{G}) = \text{card}(\mathcal{G}) - \nonumber \\ 
	&\, \sum_{i=1}^{N} \left( \min_{j=1:N} \left[ \min \left( \mathbb{I}(V_\theta(x_i) \ge V_\theta(x_j)),\mathbb{I}(\dot{V_\theta}(x_j) \ge 0) \right) \right] \right),
\end{align*}
where $\mathbb{I}(\cdot)$ denotes the boolean indicator function, i.e. $\mathbb{I}(\text{True}) = 1$ and $\mathbb{I}(\text{False}) = 0$.

Our aim is to maximize the size of $\hat{\mathcal{R}}_\mathcal{G}$ by adjusting the network parameters $\theta$. This leads to the loss function:
\begin{align}
	\label{eq:loss_function}
	\mathcal{L}(\theta) = -\text{card}\left( \hat{\mathcal{R}}_\mathcal{G} \right),
\end{align}
which encourages the Lyapunov candidate $V_\theta$ to satisfy the Lyapunov decrease condition on as many grid points as possible. The set $\hat{\mathcal{R}}_\mathcal{G}$ serves as a discrete inner approximation of the true region of attraction. The extension of this discrete approximation to a continuous-domain approximation of the RoA is discussed in subsection \ref{sec:verification}.

\begin{rem}
	The proposed loss function offers two main advantages. First, it consists of a single term per grid point, unlike existing approaches where the loss function includes multiple components and  penalize separately violations of Lyapunov conditions. This eliminates the need to tune weighting factors to balance competing objectives. Second,  our loss directly optimizes the size of the RoA estimate. Indeed, most neural-network-based RoA approximation methods follow a two-step strategy: identify a region where Lyapunov conditions hold, then compute the largest sublevel set within it. This strategy leads often to significantly smaller than the largest possible invariant set due to the conservative nature of the construction.
\end{rem}

\begin{figure*}[!t]
	\centering
	\begin{subfigure}[b]{0.32\textwidth}
		\centering
		\includegraphics[width=\textwidth]{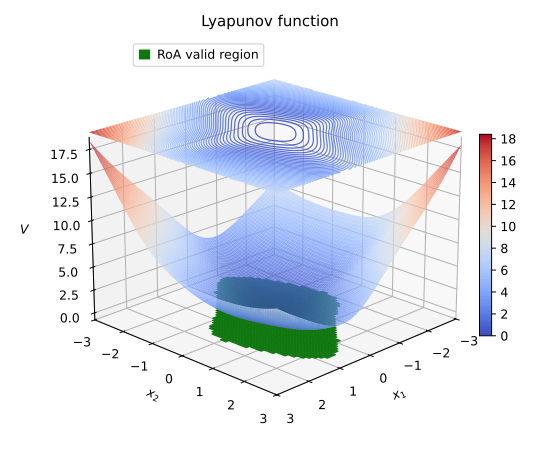}
		\caption{Lyapunov function}
		\label{fig:vanderpolsub1}
	\end{subfigure}
	\hfill
	\begin{subfigure}[b]{0.32\textwidth}
		\centering
		\includegraphics[width=\textwidth]{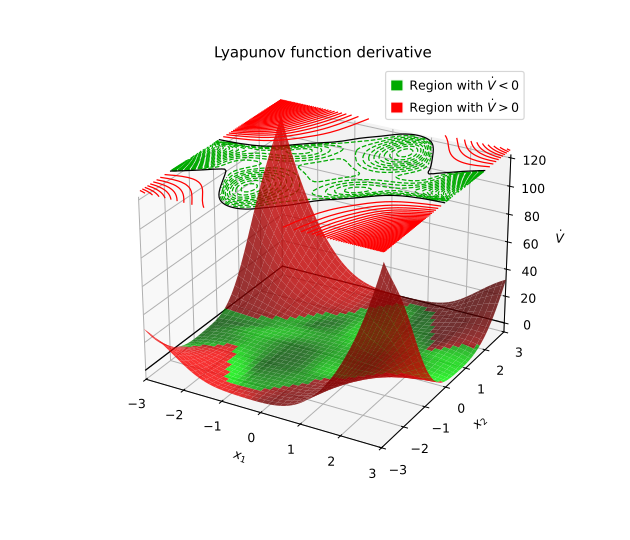}
		\caption{Lyapunov function derivative}
		\label{fig:vanderpolsub2}
	\end{subfigure}
	\hfill
	\begin{subfigure}[b]{0.31\textwidth}
		\centering
		\includegraphics[width=\textwidth]{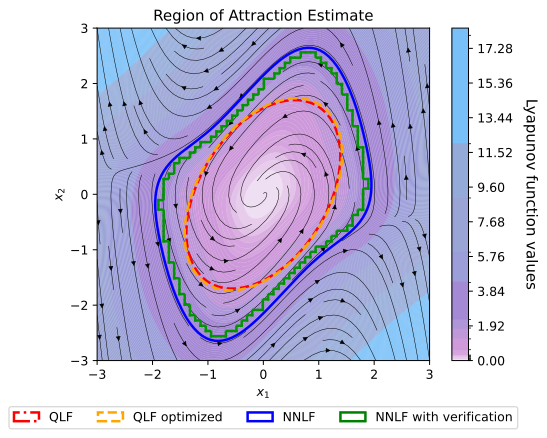} % Replace with your image file
		\caption{Region of attraction estimate}
		\label{fig:vanderpolsub3}
	\end{subfigure}
	\caption{Van der Pol oscillator system}
	\label{fig:vanderpol}
\end{figure*}
\begin{figure*}[!t]
	\centering
	\begin{subfigure}[b]{0.32\textwidth}
		\centering
		\includegraphics[width=\textwidth]{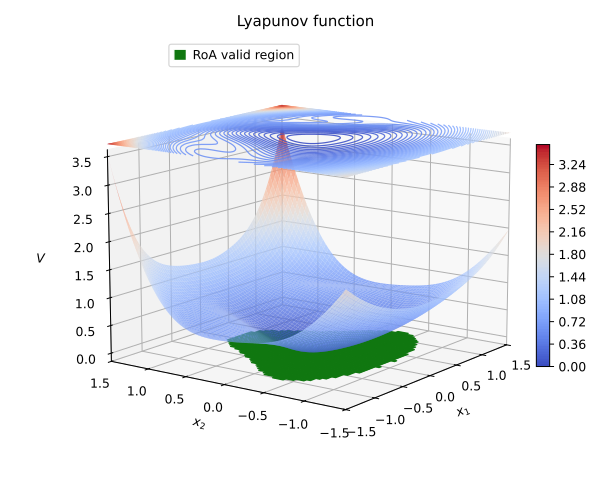}
		\caption{Lyapunov function}
		\label{fig:polynomialsub1}
	\end{subfigure}
	\hfill
	\begin{subfigure}[b]{0.32\textwidth}
		\centering
		\includegraphics[width=\textwidth]{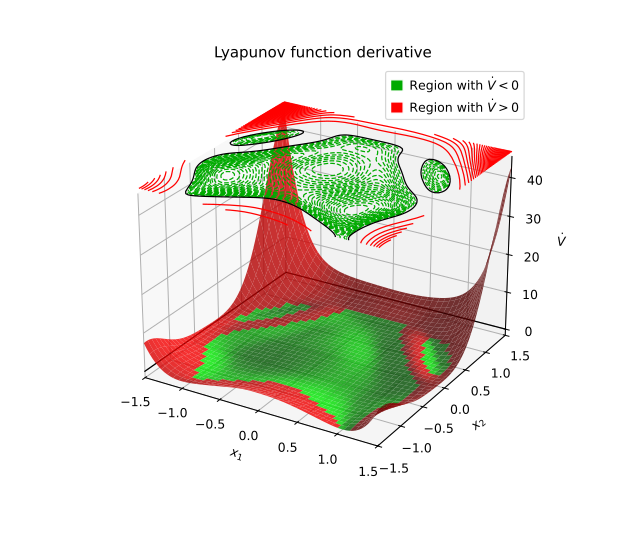}
		\caption{Lyapunov function derivative}
		\label{fig:polynomialsub2}
	\end{subfigure}
	\hfill
	\begin{subfigure}[b]{0.31\textwidth}
		\centering
		\includegraphics[width=\textwidth]{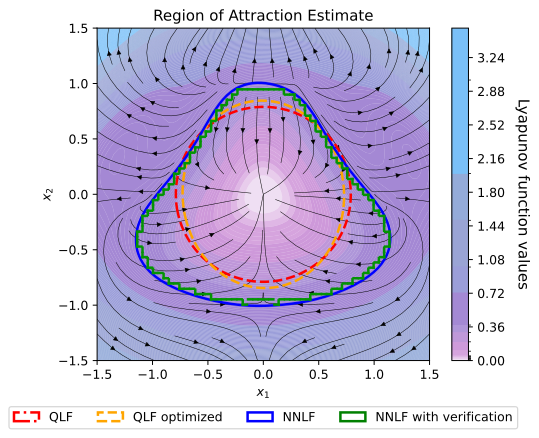} % Replace with your image file
		\caption{Region of attraction estimate}
		\label{fig:polynomialsub3}
	\end{subfigure}
	\caption{Quartic Interaction system}
	\label{fig:polynomial}
\end{figure*}

\subsection{Neural network training for RoA learning}
The training step consists in adjusting the neural network parameters to minimize a loss function. Specifically, we employ the loss function defined in equation \eqref{eq:loss_function}, which is designed to guide the network toward maximizing the size of the discrete RoA estimate over the training set $\mathcal{G}$, composed of sampled states excluding the origin.

For implementation purposes, the indicator functions appearing in \eqref{eq:loss_function} are replaced by smooth approximations, such as weighted sigmoid functions, to enable gradient-based optimization. Furthermore, the strict decrease condition $\dot{V} < 0$ is reinforced to $\dot{V} \le -\varepsilon_{\dot{V}}$ with $\varepsilon_{\dot{V}} > 0$, introducing a margin that improves numerical robustness and prevents near-zero derivatives from falsely satisfying the condition.

\subsection{Verification of the Lyapunov function properties}
Once the learning phase is complete, we initiate a verification process to validate the learned candidate function $V_\theta$ over a continuous domain, ensuring that the learned function generalizes beyond the training data. Because $V_\theta$ and $\dot{V}_\theta$ take very small values near the origin, numerical artifacts can lead to incorrect sign conclusions. To address this issue, the verification is split into two stages: (i) analysis near the origin and (ii) verification around sampled training points.

\subsubsection{Stability analysis near the origin}
\label{sec:analysis_origin}
In this subsection, we establish a principled construction of the quadratic matrix $P_\theta$ that (i) preserves positive definiteness of $V_\theta$, and (ii) guarantees a strict decay of $\dot{V}_\theta$ in a neighborhood of the origin. To this end, we start by computing bounds on different involved mathematical models in $V$ and $\dot{V}$.

\paragraph{Smoothness bounds of the dynamics model}
We recall the system dynamics equation \eqref{eq:system_dynamics} and we assume that $f$ is twice continuously differentiable with $f(0)=0$. We introduce the Jacobian $J_f(x) \in \mathbb{R}^{n \times n}$ and the Hessian $H_f(x) \in \mathbb{R}^{n \times n \times n}$ of function $f$ at point $x \in \mathcal{D}$. %Assume $A = J_f(0)$ is Hurwitz.
We decompose $f(x)$ as follows:
\begin{align*}
	f(x) = A x + g(x), \text{ with } \frac{g(x)}{\| x \|} \to 0 \text{ as } \| x \| \to 0.
\end{align*}

By bounding the first and second-order Taylor remainder terms of the function $f$, one can establish the existence of finite constants
\begin{align*}
	C_f(x) = \sup_{t\in[0,1]}\|J_f(t x)\| \text{ and } C_g(x) = \frac{1}{2} \sup_{t\in[0,1]} \|H_f(t x)\|,
\end{align*}
such that, for all $x\in \mathcal{D}$, $\|f(x)\| \le C_f(x) \|x\|$ and $\|g(x)\| \le C_g(x) \|x\|^2$. Note that the constants $C_f(x)$ and $C_g(x)$ depend on $x$. In practice, we restrict the analysis to a sufficiently large ball $\mathcal{B}_\delta = \{ x\in\mathcal{D} \, | \, \|x\| \le \delta\}$ around the origin, with $\delta > 0$. We compute upper bounds $C_f = \sup_{x \in \mathcal{B}_\delta} C_f(x)$ and $C_g = \sup_{x \in \mathcal{B}_\delta} C_g(x)$ that are valid for all $x \in \mathcal{B}_\delta$. Thus,
\begin{align}
	\label{eq:Cf_and_Cg}
	\|f(x)\| \le C_f \|x\| \text{ and } \|g(x)\| \le C_g \|x\|^2, \quad \forall x \in \mathcal{B}_\delta.
\end{align}

\paragraph{Bounds of the neural network model}
Consider the one-hidden-layer neural network introduced in subsection \ref{subsec:design_LF}, and used to construct the Lyapunov candidate function $V_\theta$:
\begin{align*}
	\varphi_\theta(x) \;=\; W_2 \tanh(W_1 x + b_1),
\end{align*}
where $W_1\in\mathbb{R}^{N\times n}$, $W_2\in\mathbb{R}^{1\times N}$, $b_1\in\mathbb{R}^N$. A first-order expansion of $\varphi_\theta$ around $0$ yields
\begin{align*}
	\varphi_\theta(x)-\varphi_\theta(0) = W_2 D W_1 x + r(x);\, \frac{r(x)}{\| x \|} \to 0 \, \text{as} \, \| x \| \to 0,
\end{align*}
where $D = diag\left(1-\tanh^2(b_1)\right) \in \mathbb{R}^{N\times N}$, $r(x) \in \mathbb{R}$ is the second-order Taylor remainder term of $\varphi_\theta(x)$ around $0$, and
\begin{align*}
\frac{\partial \varphi_\theta}{\partial x}(x) = W_2 D W_1 + s(x);\; \|s(x)\| \to 0 \text{ as } \|x\| \to 0,
\end{align*}
where $s(x) \in \mathbb{R}^{1 \times n}$ is the first-order Taylor remainder term of $\frac{\partial \varphi_\theta}{\partial x}(x)$ around $0$.

Moreover, by the Taylor's remainder theorem and the $\tanh$ function properties, we get the following inequalities:
\begin{align}
	\label{eq:bound_r}
	\|r(x)\| &\le \|W_2\| \|W_1\|^2 \|x\|^2, \\ 
	\label{eq:bound_s}
	\|s(x)\| &\le \|W_2\|\|W_1\|^2 \|x\|.
\end{align} 

\paragraph{Design of matrix $P_\theta$}
For the construction of $P_\theta$, we start by computing $\dot{V}_\theta$. Based on the previous analysis,
\begin{align*}
	\dot{V}_\theta(x) &= 2 \left( \left( \varphi_\theta(x) -  \varphi_\theta(x) \right)^T \frac{\partial \varphi_\theta(x)}{\partial x} + x^T P_\theta \right) f(x) \\
	&= 2x^T\left( W_1^TD W_2^TW_2 D W_1 + P_\theta \right)Ax + \rho(x),
\end{align*}
with $\frac{\rho(x)}{\|x\|^2} \to 0$ as  $\|x\| \to 0$. 

Since $A$ is Hurwitz, there exists a positive definite matrix $P_1$ solving the Lyapunov equation $A^T P_1 + P_1 A = -I$. Thus, by setting $P_\theta = \alpha P_1 - W_1^TD W_2^TW_2 D W_1$ with $\alpha >0$, we get $\dot{V}(x) = -\alpha \|x\|^2 + \rho(x)$. This infers the existence of a neighborhood around the origin we can guarantee strict decay of $V_\theta$. Lastly, a sufficient condition for ensuring the positive definiteness of $P_\theta$ is to set $\alpha > \underline{\alpha}$, with $\underline{\alpha}  = \frac{\| W_1 \|^2 \| W_2 \|^2 }{\lambda_{\min}(P_1)}$. in the sequel, we assume that $\alpha = \beta \underline{\alpha}$, with $\beta > 1$.

\paragraph{Strict decay in the origin neighborhood}
The last step of our analysis is to derive a neighborhood around the origin where it is guaranteed that the proposed Lyapunov candidate function $V_\theta$ is strictly decreasing. This property is crucial to avoid numerical artifacts near the origin when performing the verification \ref{sec:verification}. Based of the proposed construction of $P_\theta$, a sufficient condition for ensuring strict negativity of $\dot{V}_\theta$ is $\|\rho(x)\| < \alpha \|x\|^2$. Implementing inequalities \eqref{eq:Cf_and_Cg}, \eqref{eq:bound_r} and \eqref{eq:bound_s} in $\|\rho(x)\|$ allow to derive the constant
\begin{align*}
	\varepsilon_\delta = \frac{\beta}{2\left( \lambda_{\min}(P_1) (2+\|W_1\| \delta) \|W_1\| C_f + \lambda_{\max}(P_1) \beta C_g \right)}.
\end{align*}

Since inequality \eqref{eq:Cf_and_Cg} is valid only in $\mathcal{B}_\delta$, the neighborhood around the origin guaranteeing strict decay of $V_\theta$ is the ball $\mathcal{B}_{\varepsilon} =\{x \in \mathcal{D}|\|x\| \le \varepsilon \}$, with $\varepsilon = \min(\delta,\varepsilon_\delta)$.

\subsubsection{Region of Attraction Verification}
\label{sec:verification}

The verification step aims to extend the discrete approximation RoA, denoted by $\hat{\mathcal{R}}_\mathcal{G}$, to a continuous domain. To achieve this, we construct a candidate RoA estimate as the union of hypercubes centered at the points $x_i \in \hat{\mathcal{R}}_\mathcal{G}$. Each hypercube is defined using the $\ell_\infty$ norm as $\mathcal{C}_\gamma(x_i) = \{ y \in \mathbb{R}^n|\| y - x_i \|_\infty \le \tfrac{\gamma}{2} \}$,
where $x_i \in \hat{\mathcal{R}}_\mathcal{G}$ and $\gamma$ is the side length. The constant $\gamma$ is selected so that the union $\bigcup_{x_i \in \hat{\mathcal{R}}_\mathcal{G}} \mathcal{C}_\gamma(x_i)$ forms a connected set containing the origin. In our configuration, $\gamma$ is equal to the grid step.

To check whether Lyapunov conditions are violated within this region, we formulate a nonlinear constraint satisfaction problem expressed by the first-order logic formula:
\begin{align}
	\label{eq:violation_formula}
	\Phi(x) = \big( \| x \| \ge \varepsilon \big) \land \big( \dot{V}_\theta(x) \ge 0 \big),
\end{align}
where $\varepsilon$ is a constant introduced in subsection \ref{sec:analysis_origin}. 

Solving \eqref{eq:violation_formula} identifies points, in the continuous domain $\bigcup_{x_i \in \hat{\mathcal{R}}_\mathcal{G}} \mathcal{C}_\gamma(x_i)$,
that violate the Lyapunov conditions. These points are referred to as counterexamples. The term $\left( \|x\| \ge \varepsilon \right)$ excludes the ball of radius $\varepsilon$ from the Lyapunov negation conditions, since it is proven that $\dot{V}_\theta(x) < 0$ holds fold for any $x \in \mathcal{B}_\varepsilon$. This prevents false counterexamples caused by numerical artifacts near the origin.

For implementation, we employ an SMT solver, specifically dReal. The problem \eqref{eq:violation_formula} is solved in parallel for all hypercubes $\mathcal{C}_\gamma(x_i)$ with $x_i \in \hat{\mathcal{R}}_\mathcal{G}$. If the solver cannot satisfy the formula within a given hypercube, that region is considered verified. Otherwise, the corresponding grid point $x_i$ and all points having higher Lyapunov levels are removed from $\hat{\mathcal{R}}_\mathcal{G}$ to match the conditions of Corollary \ref{cor:RoA_estimate}. In our experiments, we observe that the identified counterexamples lay on the border of the RoA estimate. This aspect will be the subject of future studies.

\section{Numerical examples}
We demonstrate the effectiveness of the proposed method for learning RoA estimates and neural-network-based Lyapunov functions across two nonlinear dynamical systems. In all experiments, we employ a one-hidden-layer neural network of $512$ neurons to learn the Lyapunov function, using the $\tanh$ activation function. The proposed loss function is minimized using the Adam optimizer with a learning rate of $5\times10^{-4}$. The training data is generated over a specified box domain by sampling points from a uniformly spaced $50 \times 50$ grid. For SMT-based verification, we use the dReal solver with a $\delta$ parameter equal to $10^{-6}$ times the grid step, and we set $\varepsilon = 10^{-4}$ to prevent numerical artifacts. 

To further validate the obtained RoA estimates, we compare them with a standard quadratic Lyapunov function $V(x) = x^T P_1 x$, where $P_1$ is the solution of the Lyapunov equation $A^TP_1 + P_1A = -Q$ with $Q=I$, and an optimized quadratic Lyapunov function $V(x) = x^T P x$, where $P$ is learned by minimizing our RoA-based loss function \eqref{eq:loss_function} over the space of positive definite matrices $Q$.

\subsubsection{Van der Pol oscillator}
The first system to which we apply our method is the Van der Pol oscillator system in reverse time, given by
\begin{align*}
	\begin{cases}
		\dot{x}_1 &= -x_2 \\
		\dot{x}_2 &= x_1 + (x_1^2-1)x_2
	\end{cases}.
\end{align*}

The above system has one asymptotically stable equilibrium point at $(0,0)$. The goal is to learn a Lyapunov function over the domain $[-3,3]\times[-3,3]$ that provides the largest possible RoA estimate. The results of training the proposed neural Lyapunov function are shown in Figure \ref{fig:vanderpol}. Specifically, Figure \ref{fig:vanderpolsub1} illustrates the shape of the learned Lyapunov function and the RoA estimate verified by the SMT solver. Figure \ref{fig:vanderpolsub2} shows the derivative of the neural Lyapunov function, while Figure \ref{fig:vanderpolsub3} shows the phase portrait and compares different RoA estimates: the trained neural network Lyapunov function (blue) and its verified inner approximation (green), the standard quadratic Lyapunov function (red), and the optimized quadratic Lyapunov function (orange). Note that for the two examples, the QLF and the optimized QLF RoA estimates are close because of the structure of the Jacobian matrices. From this comparison, we see that the trained neural network achieves the largest RoA estimate, which closely approximates the true RoA of the system.

\subsubsection{Quartic Interaction System}

The second system to which we apply our method is the following quartic Interaction System 
\begin{align*}
	\begin{cases}
		\dot{x}_1 &= -x_1(1-x_1^2-2x_2+x_2^4) \\
		\dot{x}_2 &= -x_2(1-x_1^4-x_2^2)
	\end{cases}.
\end{align*}

The above system has one asymptotically stable equilibrium point  at $(0,0)$ and 3 unstable equilibrium points at $(0,1)$, $(1,0)$, and $(−1,0)$. The goal is to learn a Lyapunov function over the given domain $[-1.5,1.5]\times[-1.5,1.5]$ that provides the largest possible RoA estimate. Results of learning and verifying the neural Lyapunov function are shown in Figure \ref{fig:polynomial}. %Figure \ref{fig:polynomialsub1} shows the learned Lyapunov function and the SMT-verified RoA, while Figure \ref{fig:polynomialsub2} displays its derivative. Figure \ref{fig:polynomialsub3} compares different RoA estimates. 
%Notably, the neural network Lyapunov function yields the largest RoA estimate, closely matching the true RoA and outperforming both the standard and optimized quadratic Lyapunov functions.

\section{Conclusion}
We presented a new framework for learning and verifying neural-network-based Lyapunov functions to estimate RoA for nonlinear systems with formal guarantees. Our training approach significantly reduces the conservatism inherent in standard analytical methods, while ensuring formal stability guarantees via SMT-based verification. Experimental results demonstrate that it outperforms existing techniques in enlarging the verified RoA. This framework offers practical benefits for safety-critical applications such as robotics and autonomous systems, where certified stability is essential. Future work will focus on scaling to high-dimensional systems, integrating adaptive architectures for real-time learning, and developing efficient verification pipelines for online deployment. By bridging data-driven approximation with formal certification, this work sets a new direction for scalable and provably safe learning-based control in complex nonlinear environments.

\end{document}